\selectfont\symbol{60}\fontencoding{\encodingdefault}}
\selectfont\symbol{62}\fontencoding{\encodingdefault}}
\selectfont\symbol{124}\fontencoding{\encodingdefault}}
\newcommand{\assign}{:=}
\newcommand{\nobracket}{}
\newcommand{\nocomma}{}
\newcommand{\noplus}{}
\newcommand{\tmem}[1]{{\em #1\/}}
\newcommand{\tmop}[1]{\ensuremath{\operatorname{#1}}}
\newcommand{\tmstrong}[1]{\textbf{#1}}
\newcommand{\tmtextit}[1]{{\itshape{#1}}}
\newenvironment{enumerateroman}{\begin{enumerate}[i.] }{\end{enumerate}}
\newtheorem{definition}{Definition}
\newtheorem{question}{Question}
\newtheorem{remark}{Remark}
\newtheorem{example}{Example}
\newtheorem{proposition}{Proposition}
\DeclareFontFamily{U}{mathx}{\hyphenchar\font45}
\DeclareFontShape{U}{mathx}{m}{n}{
      <5> <6> <7> <8> <9> <10>
      <10.95> <12> <14.4> <17.28> <20.74> <24.88>
      mathx10
      }{}
\DeclareSymbolFont{mathx}{U}{mathx}{m}{n}
\DeclareMathAccent{\widecheck}{0}{mathx}{"71}
\DeclareMathAccent{\wideparen}{0}{mathx}{"75}
\begin{document}

\title{Some Properties of a Class of QM Phase Space Measures}

\author{Kurt Pagani}

\maketitle

\begin{abstract}
  We have a look at the probability measures induced by Schr\"odinger wave functions on phase space. 
	
\end{abstract}

\section{Introduction}

Let us consider the quantum mechanical energy of a particle in a potential
$V:$
\begin{equation}
  \mathcal{E} ( \phi ) = \frac{\hbar^{2}}{2m} \int_{\mathbbm{R}^{n}} | \nabla
  \phi ( x ) |^{2}  d x  \noplus + \int_{\mathbbm{R}^{n}} V ( x )   | \phi ( x
  ) |^{2}  d x \label{eq1}
\end{equation}
whereby $\phi \in L^{2} ( \mathbbm{R}^{n} ) \cap \{ \| \phi \| =1 \}$. Usually
$n=3$, but we leave it unspecified. Using the Fourier transformation on $L^{2}$
in the form
\[ \hat{f} ( k ) = ( 2 \pi )^{- \frac{n}{2}} \int_{\mathbbm{R}^{n}} f ( x ) 
   e^{-i  \langle k,x \rangle}  d x \]
we may rewrite $\left( \ref{eq1} \right)$ to
\begin{equation}
  \mathcal{E} ( \phi ) = \frac{\hbar^{2}}{2m} \int_{\mathbbm{R}^{n}} | k |^{2}
  | \hat{\phi} ( k ) |^{2}  d k  \noplus + \int_{\mathbbm{R}^{n}} V ( x )  
  | \phi ( x ) |^{2}  d x, \label{eq2}
\end{equation}
where
\[ \| \phi \| \assign \left( \int_{\mathbbm{R}^{n}} | \phi ( x ) |^{2}  d x
   \right)^{1/2} = \| \hat{\phi} \| =1. \]
Multiplying the first integral in $\left( \ref{eq2} \right)$ by $\| \phi
\|^{2}$ and the second by $\| \hat{\phi} \|^{2}$ we get (assuming Fubini's
theorem is applicable):
\begin{equation}
  \mathcal{E} ( \phi ) = \int_{\mathbbm{R}^{n} \times \mathbbm{R}^{n}} \left(
  \frac{\hbar^{2} | k |^{2}}{2m} +V ( x ) \right) | \phi ( x ) |^{2}   |
  \hat{\phi} ( k ) |^{2}  d x d k \label{eq3}
\end{equation}
which has the general form
\begin{equation}
  \mathcal{E} ( \phi ) = \int_{\Gamma_{n}} \mathcal{H} ( x, \hbar  k )   |
  \phi ( x ) |^{2}   | \hat{\phi} ( k ) |^{2}  d x d k = \int_{\Gamma_{n}}
  \mathcal{H} ( x, \hbar  k ) d \mu_{\phi} ( x,k ) .  \label{eq4}
\end{equation}
when introducing the classical Hamilton function $\mathcal{H} ( x,p ) =
\frac{p^{2}}{2m} +V ( x )$ and denoting the $2n-$dimensional phase space by
$\Gamma_{n} =\mathbbm{R}^{n} \times \mathbbm{R}^{n}$. The {\tmem{probability}}
measure
\[ d \mu_{\phi}  =  | \phi ( x ) |^{2}   | \hat{\phi} ( k ) |^{2}  d x d k,
\]
whose interpretation seems rather obvious, is absolutely continuous with
respect to the canonical phase space measure (Lebesgue) $d \Gamma =d x  \wedge
d k$ and has an integrable density (Radon-Nikodym)
\[ \frac{d \mu_{\phi}}{d  \Gamma} = | \phi ( x ) |^{2}   | \hat{\phi} ( k )
   |^{2}   \in L^{1} ( \Gamma_{n} ) \nocomma , \; \forall \phi \in L^{2} (
   \mathbbm{R}^{n} ) \cap \{ \| \phi \| =1 \} . \]
Equation $\left( \ref{eq4} \right)$ has just the form of an ordinary
expectation value and nothing but the constant $\hbar$ reminds to quantum
mechanics. Therefore let us forget the meaning of $\hbar$ for the moment and
merely consider it as a positive constant linking the dimension of Fourier
space $( k )$ to momentum space $( p )$ by
\[ p= \hbar  k. \]
Then we may put the cart before the horse by asking about the possibilities of
a {\tmem{probability theory}} on $\Gamma$ based on a Hamilton function
$\mathcal{H}$ and classical mechanics. But before, some questions that spring
to mind:

\begin{question}
  What is the minimum or more generally, what is the Euler equation to $\left(
  \ref{eq4} \right)$ if $\mathcal{H}$ is of general form? \ 
\end{question}

It is obvious that the measure $d \mu_{\phi_{0}}$ corresponding to a ground
state $\phi_{0}$ should be strongly supported where $\mathcal{H}$ is small
(e.g. on $\{ \mathcal{H}<0 \}$ for bound states), therefore

\begin{question}
  What can be said about the densities $\rho_{\phi} ( x,k ) \assign | \phi ( x
  ) |^{2}   | \hat{\phi} ( k ) |^{2}$ locally or - in case $\rho_{\phi}$ is
  continuous - pointwise?
\end{question}

And finally

\begin{question}
  Is there any explanation of the mysterious thumb rule $N_{\mathcal{H}} (
  \Lambda ) \sim | \mathcal{H}< \Lambda |  $for the distribution of states on
  grounds of $\left( \ref{eq4} \right) ?$
\end{question}

The last question is also connected with the quantity $\sum_{j} \int_{\Gamma}
\mathcal{H} d \mu_{\phi_{j}}$, where $\phi_{j}$ are eigenstates of the
corresponding ``Hamiltonian''. There are many more open questions of course,
like the relation to Wigner-Weyl, Moyal and other representations which will
not be touched here. Note that we use the notion of phase space - slightly
careless - for $( x,k ) - \tmop{space}$ as well as for $( q,p ) -
\tmop{space}$ and that we omitted the time dependency of the states $\phi$
which can be introduced if needed when replacing $\phi ( x )$ by $\phi ( x,t
)$ and $\Gamma_{n}$ by $\Gamma_{n} \times \mathbbm{R}$. \ \

\section{Possibilities of Phase Space Probabilities}

Given a classical Hamilton function $\mathcal{H} ( q,p,t )$, continuous on
$\mathbbm{R}^{2n+1}$ (for simplicity), and a probability measure $\mu$ on
$\Gamma_{n} =\mathbbm{R}^{n} \times \mathbbm{R}^{n}$ we set
\[ \mathfrak{E} ( \mu ,t ) = \int_{\Gamma_{n}} \mathcal{H} ( q,p,t )  d \mu (
   q,p ) \assign \mu ( \mathcal{H} ) , \hspace{1.2em} \mu ( 1 ) =1. \]
The interpretation of $\mu$ is clear: $\mu ( \chi_{A} ) =$probability to find
the ``particle'' in states $( q' ,p' )$ lying in the set $A \subset
\Gamma_{n}$, where the time $t $ is kept fix. Admitting general Radon measures
means that for instance
\[   \delta ( q-q_{0} ) \otimes \delta ( p-p_{0} ) \]
is allowed, so that
\[ \mathfrak{E} ( \mu ,t ) =\mathcal{H} ( q_{0} ,p_{0} ,t ) \]
and minimizing $\mathfrak{E} ( \mu ,t )$ results in finding the minimum of
$\mathcal{H}$. This is indeed not very interesting, therefore we should
restrict the admissible measures, for example:
\[ d \mu ( q,p ) = \rho ( q,p )  d q d p \]
where $\rho \in L^{1} ( \Gamma_{n} )$, $\rho \geqslant 0$ and
\[ \int_{\Gamma_{n}} \rho ( q,p )  d q d p=1. \]
\begin{proposition}
  Suppose $\mathcal{H} ( q,p )$ may be written as $T ( p ) +V ( q )$, then for
  any density $\rho$ satisfying the conditions above exist functions $\psi ,
  \phi \in L^{2} ( \mathbbm{R}^{n} )$ such that $\| \psi \| = \| \varphi \|
  =1$ and
  \[ \mathfrak{E} ( \mu ,t ) = \int_{\Gamma_{n}} \mathcal{H} ( q,p,t )   \rho
     ( q,p )  d q d p= \int_{\mathbbm{R}^{n}} T ( p,t ) | \varphi ( p ) |^{2} 
     d p+ \int_{\mathbbm{R}^{n}} V ( q,t )   | \psi ( q ) |^{2}  d q. \]
  Moreover, it holds:
  \[ \mathfrak{E} ( \mu ,t ) = \int_{\Gamma_{n}} \mathcal{H} ( q,p,t ) | \psi
     ( q ) |^{2}   | \varphi ( p ) |^{2}  d q d p. \]
  
\end{proposition}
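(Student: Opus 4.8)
The plan is to exploit the separable structure $\mathcal{H}(q,p,t) = T(p,t) + V(q,t)$ by reducing everything to the two \emph{marginal} distributions of $\rho$. Since $\rho \geq 0$ and $\rho \in L^1(\Gamma_n)$, Tonelli's theorem guarantees that
\[ \rho_1(q) \assign \int_{\mathbbm{R}^n} \rho(q,p)\,dp, \qquad \rho_2(p) \assign \int_{\mathbbm{R}^n} \rho(q,p)\,dq \]
are well-defined nonnegative functions in $L^1(\mathbbm{R}^n)$, each of total mass $\int \rho_1 = \int \rho_2 = \int_{\Gamma_n}\rho = 1$. I would then simply set $\psi \assign \sqrt{\rho_1}$ and $\varphi \assign \sqrt{\rho_2}$. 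These are nonnegative measurable functions, and $\|\psi\|^2 = \int \rho_1 = 1$, $\|\varphi\|^2 = \int \rho_2 = 1$, so $\psi,\varphi \in L^2(\mathbbm{R}^n)$ with the required normalization; by construction $|\psi(q)|^2 = \rho_1(q)$ and $|\varphi(p)|^2 = \rho_2(p)$.

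For the first identity, I would split the energy integral along the sum $T + V$ and integrate out the spectator variable in each term. Writing $\mathcal{H}\rho = T(p,t)\rho(q,p) + V(q,t)\rho(q,p)$ and applying Fubini gives
\[ \int_{\Gamma_n} T(p,t)\rho(q,p)\,dq\,dp = \int_{\mathbbm{R}^n} T(p,t)\rho_2(p)\,dp, \qquad \int_{\Gamma_n} V(q,t)\rho(q,p)\,dq\,dp = \int_{\mathbbm{R}^n} V(q,t)\rho_1(q)\,dq. \]
Substituting $\rho_2 = |\varphi|^2$ and $\rho_1 = |\psi|^2$ yields the claimed decomposition.

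For the "moreover" identity, I would replace $\rho$ by the product density $\rho_1\otimes\rho_2 = |\psi|^2|\varphi|^2$ and verify it produces the same energy. Splitting $T+V$ once more and factoring the iterated integrals,
\[ \int_{\Gamma_n}\bigl(T(p,t)+V(q,t)\bigr)\rho_1(q)\rho_2(p)\,dq\,dp = \Bigl(\int T\rho_2\,dp\Bigr)\Bigl(\int \rho_1\,dq\Bigr) + \Bigl(\int V\rho_1\,dq\Bigr)\Bigl(\int \rho_2\,dp\Bigr), \]
and since both mass integrals equal $1$ this collapses to $\int T\rho_2\,dp + \int V\rho_1\,dq = \mathfrak{E}(\mu,t)$. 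The conceptual point is that a separable Hamiltonian only ever sees the marginals, so $\rho$ and its decorrelated product $\rho_1\otimes\rho_2$ are indistinguishable at the level of the energy functional.

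The only genuine subtlety, rather than a real obstacle, is the measure-theoretic bookkeeping. Constructing the marginals and verifying their masses is pure Tonelli, since $\rho$ is nonnegative. The delicate point is justifying the termwise iterated integrals for the possibly \emph{signed} products $T\rho$ and $V\rho$: this needs $T\rho,\,V\rho \in L^1(\Gamma_n)$ separately (equivalently, $T$ integrable against $\rho_2$ and $V$ against $\rho_1$), which is precisely the standing assumption making $\mathfrak{E}(\mu,t)$ a finite, well-posed quantity rather than a meaningless $\infty-\infty$. Under that hypothesis Fubini applies to each term and every step above is legitimate; for the physically typical case of $T\geq 0$ and $V$ bounded below the integrability can be arranged directly and no cancellation issues arise.
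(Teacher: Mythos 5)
Your proof is correct and follows essentially the same route as the paper: form the marginals by Fubini/Tonelli, take $\psi=\sqrt{\rho_1}$, $\varphi=\sqrt{\rho_2}$, and split the separable Hamiltonian. The paper's own proof is just a terser version of this (it leaves the square-root construction and the ``moreover'' verification implicit), and your remarks on the integrability of $T\rho$ and $V\rho$ only make the argument more careful.
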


\begin{proof}
  Set $f ( q ) = \int_{\mathbbm{R}^{n}} \rho ( q,p )  d p$ and $g ( p ) =
  \int_{\mathbbm{R}^{n}} \rho ( q,p ) d q$, then by Fubini $f,g \in L^{1} (
  \mathbbm{R}^{n} )$ and
  \[ \int_{\mathbbm{R}^{n}} f ( q )  d q= \int_{\mathbbm{R}^{n}} g ( p )  d
     p=1 \]
  and
  \[ \mathfrak{E} ( \mu ,t ) = \int_{\Gamma_{n}} \mathcal{H} ( q,p,t )   \rho
     ( q,p )  d q d p= \int_{\mathbbm{R}^{n}} T ( p,t ) g ( p )  d p+
     \int_{\mathbbm{R}^{n}} V ( q,t )  f ( q )  d q. \]
  Since $f,g$ are non-negative there are measurable functions $\psi , \varphi
  \in L^{2} ( \mathbbm{R}^{n} ,\mathbbm{C} )$ with the stated properties.
\end{proof}

\begin{remark}
  Is there a physically justifiable reason requiring that $\psi$ and $\varphi$ are
  connected via Fourier transform? If so, then the stationary Schr{\"o}dinger like
  equation would result as the Euler equation of the functional
  $\mathfrak{E} ( \mu ,t )$ when dealing as outlined in the introduction. 
	We will of course avoid any kind of speculation, however, one can try to extract
	as many properties out of this fact and compare it to other possible relations.
\end{remark}

We recall some properties of the Fourier transformation:
\[ A \in \tmop{SL} ( n ) \Rightarrow \widehat{R_{A} f}  = R_{A^{-T}} \hat{f}  
\]
\[ \epsilon >0  \Rightarrow   \widehat{S_{\varepsilon} f} = \varepsilon^{-n} 
   S_{1/ \varepsilon} \hat{f} \]
\[ a \in \mathbbm{R}^{n}   \Rightarrow   \widehat{T_{a} f} =e^{ i \langle a,
   \cdot \rangle} \hat{f} \]
where the operators $R_{A} ,S_{\varepsilon}$ and $T_{a}$ are rotation by $A$,
dilation by $\varepsilon$ and translation by the vector $a$. If $A \in O ( n
)$ then $A^{-T} := ( A^{-1} )^{T} =A$. In fact, it was shown by Hertle
{\cite{AH1982}} that if a continuous operator $F:\mathcal{D} ( \mathbbm{R}^{n}
) \rightarrow \mathcal{D}' ( \mathbbm{R}^{n} )$ satisfies the three relations
above for any $A \in O ( n )$, $\varepsilon >0$ and $a \in \mathbbm{R}^{n}$,
then it is a constant multiple of the Fourier transform. For $n=1$ it was
shown by Cooper {\cite{JC1970}} that any linear operator on $L^{2} (
\mathbbm{R} )$ which intertwines translations and modulations must be a
Fourier transform. Some newer results {\cite{AV2008}} characterize FT even
without the assumption of linearity.

\subsection{Properties of the measures $d \mu_{\phi}$}

Let $\mathcal{M}^{\flat} ( \Gamma_{n} )$ denote the set of all Radon
measures of the form
\[ \mu [ \phi ] ( f ) = \int_{\mathbbm{R}^{n} \times \mathbbm{R}^{n}} f ( x,k
   )   | \phi ( x ) |^{2}   | \hat{\phi} ( k ) |^{2}  d x d k  \]
where $\phi \in L^{2} ( \mathbbm{R}^{n} ) \cap \{ \| \phi \| =1 \}$, $\forall
f \in C^{0} ( \Gamma_{n} ,\mathbbm{R} )$, $\Gamma_{n} =\mathbbm{R}^{n} \times
\mathbbm{R}^{n} .$ All these measures are positive and bounded because
\[ \mu [ \phi ] ( f ) \geqslant 0  \tmop{if}  f \geqslant 0  \tmop{and}   \mu
   [ \phi ] ( 1 ) =1. \]
Therefore every bounded measurable function $f: \Gamma_{n} \rightarrow
\mathbbm{R} \cup \{ \pm \infty \}$ is integrable, in particular, every bounded
semi-continuous $f$ is integrable. Instead of using measures one can also
think of $\mu [ \phi ]$ as
\[ \langle \phi \otimes \hat{\phi} | f | \phi \otimes \hat{\phi} \rangle \]
with $\langle \phi \otimes \hat{\phi} | | \phi \otimes \hat{\phi} \rangle =
\langle \phi , \phi \rangle \langle \hat{\phi} , \hat{\phi} \rangle = \| \phi
\|^{2}   \| \hat{\phi} \|^{2} =1$ on $L^{2} ( \Gamma^{n} ) \simeq L^{2} (
\mathbbm{R}^{n} ) \otimes L^{2} ( \mathbbm{R}^{n} )$.

\begin{proposition}
  \label{p6}With the notation above, every $\mu_{\phi} \in \mathcal{M}^{\flat}
  ( \Gamma_{n} )$ has the following properties:
  \begin{enumerateroman}
    \item $\mu [ R_{A} \phi ] ( f ) = \frac{1}{| \det ( A ) |^{2}} \mu [ \phi
    ] \left( R_{A^{\mbox{-} 1} \otimes A^{T}}  f \right)  , \forall A \in
    \tmop{GL} ( n )$
    
    \item $\mu [ S_{\lambda} \phi ] ( f ) = \lambda^{\mbox{-} 2n} \mu [ \phi ]
    \left( S_{\frac{1}{\lambda} \otimes \lambda} f \right) , \; \forall
    \lambda >0$
    
    \item $\mu [ T_{a} \phi_{} ] ( f ) = \mu [ \phi ] \left( T_{\mbox{-} a
    \otimes 0} f_{} \right)$, $\forall a \in \mathbbm{R}^{n}$
  \end{enumerateroman}
\end{proposition}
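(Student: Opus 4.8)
The plan is to handle all three identities by a single mechanism: in the defining integral $\mu[\phi](f)=\int_{\Gamma_n} f(x,k)\,|\phi(x)|^2|\hat\phi(k)|^2\,dx\,dk$, push the group action off $\phi$ and onto $f$ by a linear change of variables, tracking the scaling constant contributed by the Fourier transform together with the two Jacobians coming from the substitutions in the $x$- and $k$-slots. I fix the conventions forced by the Fourier rules recalled above and by the stated determinant powers, namely $(R_Af)(x)=f(Ax)$, $(S_\lambda f)(x)=f(\lambda x)$, $(T_af)(x)=f(x+a)$, and read the phase-space operators coordinatewise, so that $(R_{A^{-1}\otimes A^T}f)(x,k)=f(A^{-1}x,A^Tk)$, $(S_{1/\lambda\otimes\lambda}f)(x,k)=f(x/\lambda,\lambda k)$ and $(T_{-a\otimes0}f)(x,k)=f(x-a,k)$. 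Since $R_A$ and $S_\lambda$ do not preserve the $L^2$-norm, I first observe that $\mu[\cdot]$ is well defined by the same integral for every $\psi\in L^2(\mathbbm{R}^n)$, with total mass $\mu[\psi](1)=\|\psi\|^2\|\hat\psi\|^2$; the three identities are then equalities of (possibly unnormalized) positive measures, which reduce to probability measures exactly when the relevant Jacobian is unimodular.

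I would treat (iii) first, as it is the cleanest. Here $|(T_a\phi)(x)|^2=|\phi(x+a)|^2$, while the rule $\widehat{T_a\phi}=e^{i\langle a,\cdot\rangle}\hat\phi$ contributes only a unimodular phase, so $|\widehat{T_a\phi}(k)|^2=|\hat\phi(k)|^2$ is untouched; the substitution $u=x+a$ (Jacobian $1$) then shifts the first argument of $f$ by $-a$ and fixes the second, i.e. produces $T_{-a\otimes0}f$, with no surviving constant. For (ii), the dilation rule $\widehat{S_\lambda\phi}(k)=\lambda^{-n}\hat\phi(k/\lambda)$ gives a factor $\lambda^{-2n}$ after squaring; the substitutions $u=\lambda x$ and $v=k/\lambda$ carry reciprocal Jacobians $\lambda^{-n}$ and $\lambda^{n}$ that cancel, so the $\lambda^{-2n}$ is exactly the surviving prefactor while $f$ becomes $f(u/\lambda,\lambda v)=(S_{1/\lambda\otimes\lambda}f)(u,v)$.

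Part (i) is the substantive case and carries the main obstacle: the Fourier rotation law is recalled only for $A\in SL(n)$, whereas the claim ranges over all $A\in GL(n)$, so the key step is to extend it. Rather than split $A$ into a dilation times an $SL$-factor (awkward when $\det A<0$), I would extend it directly: substituting $y=Ax$ in $\widehat{R_A\phi}(k)=(2\pi)^{-n/2}\int\phi(Ax)e^{-i\langle k,x\rangle}\,dx$ gives $\widehat{R_A\phi}(k)=|\det A|^{-1}\hat\phi(A^{-T}k)$, whence $|\widehat{R_A\phi}(k)|^2=|\det A|^{-2}|\hat\phi(A^{-T}k)|^2$. The delicate bookkeeping is then to keep the three determinant contributions apart: the $|\det A|^{-2}$ from the non-unitarity of $R_A$, and the Jacobians $|\det A|^{-1}$ (from $u=Ax$) and $|\det A|$ (from $v=A^{-T}k$) of the change of variables, whose product is $1$. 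Only the first factor survives, yielding $\mu[R_A\phi](f)=|\det A|^{-2}\mu[\phi](g)$ with $g(u,v)=f(A^{-1}u,A^Tv)=(R_{A^{-1}\otimes A^T}f)(u,v)$, as claimed. As a check, putting $f\equiv1$ (so $R_{A^{-1}\otimes A^T}1=1$) gives $\mu[R_A\phi](1)=|\det A|^{-2}$, which agrees with $\|R_A\phi\|^2\|\widehat{R_A\phi}\|^2$ computed via Plancherel.
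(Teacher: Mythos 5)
Your proof is correct and follows essentially the same route as the paper's: push the action onto $f$ via the Fourier transformation rules and a linear change of variables, with the $|\det A|^{-2}$ (resp.\ $\lambda^{-2n}$) surviving because the two Jacobians cancel. The paper only writes out case (i) and remarks that (ii), (iii) are analogous; your explicit treatment of all three, the extension of the rotation rule from $\tmop{SL}(n)$ to $\tmop{GL}(n)$, and the normalization check with $f\equiv 1$ are all consistent with it.
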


\begin{proof}
  Let $A \in \tmop{GL} ( n )$, then
  \[ \mu [ R_{A} \phi ] ( f ) = \int_{\Gamma_{n}} f ( x,k ) | \phi ( A x )
     |^{2}   | \widehat{R_{A} \phi} ( k ) |^{2}  d x d k  \]
  \[ = | \det ( A ) |^{-2} \int_{\Gamma_{n}} f ( x,k ) | \phi ( A x ) |^{2}  
     | \hat{\phi} ( A^{-T}  k ) |^{2}  d x d k  \]
  \[ \int_{\Gamma_{n}} f \left( A^{\mbox{-} 1} \xi ,A^{T} \eta \right) | \phi ( \xi ) |^{2}
     | \hat{\phi} ( \eta ) |^{2}   \frac{d \xi d \eta}{| \det ( A ) |^{2}} =
     \frac{1}{| \det ( A ) |^{2}} \mu [ \phi ] \left( R_{A^{\mbox{-} 1}
     \otimes A^{T}}  f \right) \]
  where the coordinate transformation $\xi =A x, \eta =A^{\mbox{-} T}  k$ was
  used. Note that $A^{-T}   \tmop{means}   ( A^{-1} )^{T}  $. The proof of $(
  \tmop{ii} ) , ( \tmop{iii} )$ goes along the same lines.

\end{proof}

The special linear group $\tmop{SL}_{} ( n,\mathbbm{R} )$ induces a subgroup
of the symplectic group $\tmop{Sp} ( 2n,\mathbbm{R} )$ as follows:
\[ \mathcal{R}_{A} = \left(\begin{array}{cc}
     A^{\mbox{-} 1} & 0\\
     0 & A^{T}
   \end{array}\right) , \; \forall A \in \tmop{SL} ( n ) . \]
That $\mathcal{R}_{A} \in \tmop{Sp} ( 2n,\mathbbm{R} )$ follows from
\[ \mathcal{R}_{A}^{T}   \Omega  \mathcal{R}_{A} = \left(\begin{array}{cc}
     A^{\mbox{-} T} & 0\\
     0 & A^{}
   \end{array}\right) \left(\begin{array}{cc}
     0 & A^{T}\\
     \mbox{-} A^{\mbox{-} 1} & 0
   \end{array}\right) = \left(\begin{array}{cc}
     0 & I_{n}\\
     \mbox{-} I_{n} & 0
   \end{array}\right) =  \Omega \]
where $I_{n}$ is the unit matrix in $\mathbbm{R}^{n}$. The special case $A \in
\tmop{SO} ( n )$ gives $\mathcal{R}_{A} = \tmop{diag} [ A,A ]$. Moreover we
have
\[ \mathcal{R}_{A}  \mathcal{R}_{B}  = \mathcal{R}_{B A} \in \tmop{Sp} (
   2n,\mathbbm{R} ) \]
\begin{proposition}
  $\mathcal{M}^{\flat} ( \Gamma_{n} )$ is {\tmstrong{invariant}} under $\{
  \mathcal{R}_{A} :A \in \tmop{SL} ( n,\mathbbm{R} ) \} \vartriangleleft
  \tmop{Sp} ( 2n,\mathbbm{R} )$.
\end{proposition}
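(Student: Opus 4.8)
The plan is to read ``invariance of $\mathcal{M}^{\flat}(\Gamma_n)$ under the group $\{\mathcal{R}_A\}$'' in the only natural way: the symplectic map $\mathcal{R}_A$ acts on phase-space measures by push-forward, $(\mathcal{R}_A)_*\mu$, and the assertion is that this push-forward sends $\mathcal{M}^{\flat}(\Gamma_n)$ into — in fact onto — itself. So the whole content is to identify $(\mathcal{R}_A)_*\mu[\phi]$ with another member of the family $\{\mu[\psi]:\psi\in L^2,\ \|\psi\|=1\}$, and Proposition~\ref{p6}(i) is exactly the tool that achieves this once we specialize to $A\in\tmop{SL}(n,\mathbbm{R})$.

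First I would unwind the definitions. By the change-of-variables formula for push-forward, for every test function $f\in C^0(\Gamma_n,\mathbbm{R})$ one has $(\mathcal{R}_A)_*\mu[\phi](f)=\mu[\phi](f\circ\mathcal{R}_A)$, and since $\mathcal{R}_A(x,k)=(A^{-1}x,A^T k)$ this is precisely $\mu[\phi]\bigl(R_{A^{-1}\otimes A^T}f\bigr)$, the rotation operator already appearing in Proposition~\ref{p6}(i). Thus the push-forward is governed by the same algebra as that proposition, merely read in the opposite direction. Next I would insert the hypothesis $A\in\tmop{SL}(n,\mathbbm{R})$, i.e. $|\det A|=1$: Proposition~\ref{p6}(i) then collapses to $\mu[R_A\phi](f)=\mu[\phi]\bigl(R_{A^{-1}\otimes A^T}f\bigr)$, whence $(\mathcal{R}_A)_*\mu[\phi]=\mu[R_A\phi]$.

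It remains to verify that $R_A\phi$ is an admissible generator, i.e. $R_A\phi\in L^2(\mathbbm{R}^n)$ with $\|R_A\phi\|=1$; but $\|R_A\phi\|^2=\int|\phi(Ax)|^2\,dx=|\det A|^{-1}\|\phi\|^2=1$, again by $|\det A|=1$. Hence $(\mathcal{R}_A)_*\mu[\phi]\in\mathcal{M}^{\flat}(\Gamma_n)$, which is the asserted invariance. Surjectivity follows because $R_A$ is invertible with inverse $R_{A^{-1}}$, while the relation $\mathcal{R}_A\mathcal{R}_B=\mathcal{R}_{BA}$ — matched on the generating functions by $R_AR_B=R_{BA}$ — shows the construction is compatible with the group composition, so the whole subgroup maps $\mathcal{M}^{\flat}(\Gamma_n)$ onto itself.

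I do not expect a serious obstacle: the statement is essentially a one-line corollary of Proposition~\ref{p6}(i). The only point demanding care — and the precise reason the result is confined to $\tmop{SL}(n)$ rather than all of $\tmop{GL}(n)$ — is the bookkeeping of the Jacobian. The factor $|\det A|^{-2}$ in Proposition~\ref{p6}(i) and the factor $|\det A|^{-1}$ in the norm $\|R_A\phi\|$ must both equal $1$, which happens exactly when $|\det A|=1$. For general $A\in\tmop{GL}(n)$ the push-forward remains proportional to $\mu[R_A\phi]$, but $R_A\phi$ is no longer normalized and the determinant prefactor does not cancel, so the class $\mathcal{M}^{\flat}$ is not preserved; making this dichotomy explicit is the substantive part of the write-up.
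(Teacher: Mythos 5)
Your argument is correct and follows essentially the same route as the paper: the proposition is read off from Proposition \ref{p6}(i) after noting that $\det(A)=1$ kills the Jacobian prefactor, so that the action of $\mathcal{R}_A$ on test functions is absorbed into the replacement $\phi\mapsto R_A\phi$. Your additional checks (that $\|R_A\phi\|=1$, surjectivity via $R_{A^{-1}}$, and the explicit failure for general $\tmop{GL}(n)$) are sound and merely make explicit what the paper leaves implicit.
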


\begin{proof}
  This is an immediate consequence of Proposition (\ref{p6}$i$). Indeed,
  \[ \mu [ \phi ] \left( R_{A^{\mbox{-} 1} \otimes A^{T}}  f \right) = \mu [
     \phi ] ( \mathcal{R}_{A} f ) = \mu [ R_{A} \phi ] ( f ) \]
  since $\det ( A ) =1$ if $A \in \tmop{SL} ( n,\mathbbm{R} ) .$
\end{proof}

What about the full group? Let
\[ M = \left(\begin{array}{cc}
     A^{} & B\\
     C & D
   \end{array}\right) , \; A,B,C,D \in \mathfrak{M}_{n \times n} \]
then $M$ being symplectic is equivalent to the conditions:
\[ A^{T} D-C^{T} B=I_{n} \]
\[ A^{T} C=C^{T} A \]
\[ D^{T} B=B^{T} D \]
Now
\[ \mu [ \phi ] ( R_{M} f ) = \int_{\Gamma_{n}} f ( A x+B k,C x+D k )   | \phi ( x ) |^{2} 
   | \hat{\phi} ( k ) |^{2}  d x d k \]
and using
\[ M^{-1} = \Omega^{-1} M^{T}   \Omega = \left(\begin{array}{cc}
     D^{T} & -B^{T}\\
     -C^{T} & A^{T}
   \end{array}\right) \]
we obtain
\[ \mu [ \phi ] ( R_{M} f ) = \int_{\Gamma_{n}} f ( \xi , \eta )   \vert  \phi 
  \left( D^{T} \xi \mbox{-} B^{T} \eta \right) \vert ^{2}   
	 \vert  \hat{\phi} \left(\mbox{-} C^{T} \xi +A^{T} \eta \right) 
	  \vert ^{2}  d  \xi  d  \eta . \]
Suppose for the moment that $A=D=0$, then the first condition requires $C=
\mbox{-} B^{-T}$, thus the integral above reduces to
\[ \int_{\Gamma_{n}} f ( \xi , \eta )   | \phi ( C^{-1}   \eta ) |^{2}   \vert  \hat{\phi}
   \left( \mbox{-} C^{T} \xi \right) \vert ^{2}  d  \xi  d  \eta \]
which cannot be of the form $\mu [ \phi' ] ( f )$ unless $\vert  \hat{\phi}
\left( \mbox{-} k \right) \vert ^{} = | \hat{\phi} ( k ) |^{}$. The latter
holds if one restricts to real or purely imaginary functions $\phi \nocomma$.

It is easily seen that if
\[ \int_{\Gamma_{n}} \frac{d \mu_{\phi}}{( 1+ | x | + | k | )^{N}} < \infty \]
for some $N \in \mathbbm{Z}_{+}$, then the measure $\mu_{\phi}$ extends to a
tempered distribution. Most of the common {\tmem{uncertainty}}
{\tmem{relations}} (Heisenberg, Weil, Hardy ...) are based on the fact that a
function and its Fourier transform are in some sense antagonists. Therefore it
is expected that $\mu_{\phi}$ cannot be too localizing. Of course, if $\phi$
has compact support then $\hat{\phi}$ is a real analytic function of $k$,
hence cannot vanish identically on an open subset of $\mathbbm{R}^{n}$, and so
it is impossible that $\tmop{supp} \{ \mu_{\phi} \}$ is compact. In fact, a
nonzero $\phi$ cannot vanish outside a set of finite Lebesgue measure while
$\hat{\phi}$ does the same for (possibly) another set of finite measure, as
was proved by Benedicks {\cite{MB1985}}. Quantitative results to this fact
were obtained by Nazarov and Steiner {\cite{AS1974}} for example in the case
$n=1$ and generalized to $n \geqslant 1$ by Jaming {\cite{PJ2006}}. \

\begin{proposition}
  Suppose $N>2n$ and
  \[ \int_{\Gamma_{n}} \mathcal{H} ( x,k )  d \mu_{\phi}  <  \infty \]
  for some normalized $\phi \in L^{2} ( \mathbbm{R}^{n} )$, where the function
  $\mathcal{H}$ grows like
  \[ \mathcal{H} ( x,k ) \sim \frac{e^{2 | \langle x,k \rangle |}}{( 1+ | x |
     + | k | )^{N}} , \hspace{1.8em} | x |^{2} + | k |^{2}  > R \gg 1 \]
  then $\phi$ is of the form
  \[ \phi ( x ) =P ( x )   \exp ( - \langle x,A x \rangle ) , \hspace{1.2em} A
     \in O ( n )_{+} , \]
  where $P$ is a polynomial with $\deg ( P ) < \frac{( N-n )}{2}$.

\end{proposition}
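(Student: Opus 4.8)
The plan is to recognise the hypothesis as a disguised instance of the Beurling--H\"ormander uncertainty principle. First I would turn the assumption into a clean weighted integrability statement for the pair $(\phi,\hat\phi)$. Since $d\mu_\phi=|\phi(x)|^{2}\,|\hat\phi(k)|^{2}\,dx\,dk$, the stated growth $\mathcal H(x,k)\sim e^{2|\langle x,k\rangle|}(1+|x|+|k|)^{-N}$ gives $\mathcal H\ge\tfrac12 e^{2|\langle x,k\rangle|}(1+|x|+|k|)^{-N}$ on $\{|x|^{2}+|k|^{2}>R\}$ for a possibly larger $R$, so $\int_{\Gamma_n}\mathcal H\,d\mu_\phi<\infty$ forces the tail of
\[ I_2:=\int_{\Gamma_n}\frac{e^{2|\langle x,k\rangle|}}{(1+|x|+|k|)^{N}}\,|\phi(x)|^{2}\,|\hat\phi(k)|^{2}\,dx\,dk \]
to be finite. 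On the complementary ball the integrand is bounded and $\mu_\phi$ is a probability measure, so that piece is finite too; hence $I_2<\infty$. Only the tail lower bound on $\mathcal H$ is used, so its sign elsewhere is irrelevant.

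The key step is to pass from this $L^{2}$-type bound to the $L^{1}$-type bound
\[ I_1:=\int_{\Gamma_n}\frac{e^{|\langle x,k\rangle|}}{(1+|x|+|k|)^{N}}\,|\phi(x)|\,|\hat\phi(k)|\,dx\,dk<\infty \]
\emph{with the same exponent} $N$. Set $u=e^{|\langle x,k\rangle|}|\phi(x)||\hat\phi(k)|$ and $w=(1+|x|+|k|)^{-N}$ and split $\Gamma_n$ into $\{u\ge1\}$ and $\{u<1\}$. On the first set $u\le u^{2}$, so that contribution is at most $\int u^{2}w=I_2$; on the second set $uw<w$, so that contribution is at most $\int_{\Gamma_n}w$, which is finite precisely because $N>2n$. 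Thus $I_1\le I_2+\int_{\Gamma_n}w<\infty$. This elementary truncation is exactly where the hypothesis $N>2n$ enters, and it is important that it preserves $N$: a Cauchy--Schwarz split of the polynomial weight would also close the estimate but would lose a factor and weaken the final degree bound.

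Finally, $I_1<\infty$ is precisely the hypothesis of the $n$-dimensional Beurling--H\"ormander theorem in the sharp polynomial form due to Bonami, Demange and Jaming \cite{PJ2006}. With the symmetric convention for $\hat\cdot\,$ used here (for which $e^{|\langle x,k\rangle|}$ is the critical Beurling weight and $e^{-|x|^{2}/2}$ the critical Gaussian), that theorem states that $I_1<\infty$ forces either $\phi\equiv0$ or
\[ \phi(x)=P(x)\exp(-\langle x,Ax\rangle), \]
with $A$ real symmetric positive definite --- the class written $O(n)_{+}$ --- and $P$ a polynomial of degree $<(N-n)/2$. Since $\|\phi\|=1$ the trivial alternative is excluded, and because $N>2n>n$ the bound leaves room for a nonzero $P$. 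This is the assertion.

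The main obstacle is the Beurling--H\"ormander theorem itself, which is genuinely deep (its sharp polynomial version rests on a Phragm\'en--Lindel\"of and Hermite-expansion analysis); relative to it, the content of the argument above is the structural observation that the quadratic phase-space weight $e^{2|\langle x,k\rangle|}$ carried by the product density $|\phi|^{2}|\hat\phi|^{2}$ is the exact square of the Beurling weight $e^{|\langle x,k\rangle|}$ carried by $|\phi||\hat\phi|$, so that integrability transfers from the squared to the linear integrand at no cost in $N$. The only routine points left to verify are the normalisation constants and the measure-theoretic bookkeeping in the first step.
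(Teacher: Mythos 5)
Your argument is correct and follows essentially the same route as the paper: the paper simply declares the proposition a ``simple corollary'' of the generalized Beurling--H\"ormander principle and states that theorem, while you additionally supply the reduction it leaves implicit --- extracting $I_2<\infty$ from the growth hypothesis and then passing to $I_1<\infty$ \emph{with the same exponent} $N$ via the split on $\{u\ge 1\}$ versus $\{u<1\}$, which is precisely where $N>2n$ enters and why the degree bound $\deg(P)<(N-n)/2$ survives intact. The only blemish is bibliographic: the sharp polynomial Beurling--H\"ormander theorem you invoke is \cite{BDJ2001}, not \cite{PJ2006} (the latter is Jaming's Nazarov-type uncertainty principle).
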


This shows that the measures $\mu_{\phi}$ cannot handle the case where very
rapid decreasing in both variables $x,k$ would be necessary to provide
finiteness unless $\phi$ is a Gaussian function times a polynomial. Moreover
it demonstrates how subtle the balance is relative to such functions. The
proof is a simple corollary of a (recent) generalized version of the
Beurling-H{\"o}rmander principle {\cite{BDJ2001}}:

For $N \geqslant 0$ and $\varphi \in L^{2} ( \mathbbm{R}^{n} )$,
\[ \int_{\mathbbm{R}^{n} \times \mathbbm{R}^{n}} \frac{| \varphi ( x )
   \nobracket   | \hat{\varphi} ( k ) |}{( 1+ | x | + | k | )^{N}}  e^{|
   \langle x,k \rangle |}  d x d k <  \infty \]
if and only if
\[ \varphi ( x ) =P ( x )  e^{- \langle x,A x \rangle} \]
for a positive definite symmetric matrix $A$ and a polynomial $P$ of degree
smaller than $\frac{N-n}{2} .$ Moreover, if $N<n$, then $\varphi \equiv 0$.
\[  \]

\section{General Hamilton Functions}

Let $\mathcal{H} ( q,p,t ) \in C ( \Gamma_{n} )$ for any fixed $t$. The
functional
\[ \mathcal{E} ( \phi ,t ) = \int_{\Gamma_{n}} \mathcal{H} ( x, \hbar k,t )  d
   \mu_{\phi} ( x,k ) \]
is well defined, however, if $\mathcal{H}$ is not bounded it may assume values
in $\bar{\mathbbm{R}} =\mathbbm{R} \cup \{ \pm \infty \}$. Since time $t$
plays no role in the following, we will omit it.

\begin{proposition}
  Suppose $\phi_{0} \in L^{2} ( \mathbbm{R}^{n} ) \cap \{ \| \phi \| =1 \}$
  satisfies
  \[ \mathcal{E} ( \phi_{0} ) =E_{0} = \inf \left\{ \int_{\Gamma_{n}}
     \mathcal{H} ( x, \hbar k )  d \mu_{\phi} ( x,k )  : \phi \in L^{2} (
     \mathbbm{R}^{n} ) , \| \phi \| =1  \right\} \in \mathbbm{R} \]
  then it is a (distributional) solution to
  \[ F_{\phi} ( x )   \phi ( x ) + {\cal{F}}^{-1}({G_{\phi} ( k )  \hat{\phi} ( k )}
     ) ( x ) =2 E_{0} \, \phi ( x ) \]
  where \footnote{${\cal{F}}^{-1}$ = inverse FT (lacking a reasonable {\tt widecheck}  symbol).}
  \[ F_{\phi} ( x ) = \int_{\mathbbm{R}^{n}} \mathcal{H} ( \hbar k,x )   |
     \hat{\phi} ( k ) |^{2}  d k, \hspace{2.4em} G_{\phi} ( k ) \assign
     \int_{\mathbbm{R}^{n}} \mathcal{H} ( \hbar k,x ) | \phi ( x ) |^{2}  d x.
  \]
\end{proposition}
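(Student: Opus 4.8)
The plan is to treat this as a constrained variational problem and to derive the equation by a first-variation (Lagrange multiplier) computation. The only genuinely nonstandard feature is that $\mathcal{E}$ is \emph{quartic} in $\phi$: writing out $d\mu_\phi$, the integrand $|\phi(x)|^2\,|\hat\phi(k)|^2$ has degree two in $\phi$ and degree two in $\bar\phi$. First I would fix a convenient dense class of admissible perturbations $\eta$ (say Schwartz, or $C_c^\infty$), form $\phi_\epsilon=\phi_0+\epsilon\eta$, and introduce the multiplier through $J(\phi)=\mathcal{E}(\phi)-\Lambda(\|\phi\|^2-1)$, requiring the first variation of $J$ at the minimizer $\phi_0$ to vanish.

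Second, I would expand $|\phi_0+\epsilon\eta|^2=|\phi_0|^2+\epsilon(\bar\phi_0\eta+\phi_0\bar\eta)+O(\epsilon^2)$ and, using the linearity of the Fourier transform, the analogous expansion of $|\widehat{\phi_0+\epsilon\eta}|^2$. Collecting the coefficient of $\epsilon$ produces four terms; the pieces in which the position variable survives assemble into $F_\phi$ and those in which the Fourier variable survives into $G_\phi$ (these are precisely the partial expectations in the statement, one obtained by integrating out the Fourier variable against $|\hat\phi|^2$, the other by integrating out the position variable against $|\phi|^2$). Since $\mathcal{E}$ is real, the variation has the form $\int(A\bar\eta+\bar A\eta)$, so it suffices to set the coefficient of $\bar\eta$ to zero. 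Transporting the momentum-side contribution $\int G_\phi\hat\phi_0\,\overline{\hat\eta}\,dk$ back to physical space by Plancherel, $\int G_\phi\hat\phi_0\,\overline{\hat\eta}\,dk=\int \mathcal{F}^{-1}(G_\phi\hat\phi_0)\,\bar\eta\,dx$, and comparing with the first variation $\int\phi_0\bar\eta\,dx$ of the constraint, yields the distributional identity $F_\phi\,\phi_0+\mathcal{F}^{-1}(G_\phi\hat\phi_0)=\Lambda\,\phi_0$.

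Third, I would identify the multiplier. Pairing this equation with $\phi_0$ in $L^2$, the first term gives $\int|\phi_0|^2F_\phi\,dx=\mathcal{E}(\phi_0)=E_0$, and by Plancherel the second gives $\int G_\phi|\hat\phi_0|^2\,dk=\mathcal{E}(\phi_0)=E_0$ as well; since $\|\phi_0\|^2=1$ this forces $\Lambda=2E_0$, which is exactly the claimed equation. This is the conceptual crux: the factor $2$ is Euler's homogeneity identity for the degree-four functional, with each of the two quadratic factors $|\phi|^2$ and $|\hat\phi|^2$ contributing one copy of $E_0$.

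The main obstacle is not the algebra but justifying differentiation under the integral sign when $\mathcal{H}$ is unbounded: one knows only that $\mathcal{E}(\phi_0)=E_0<\infty$, so I must show that $\epsilon\mapsto\mathcal{E}(\phi_0+\epsilon\eta)$ is finite near $0$ and differentiable there, with a dominating function for the difference quotient, at least for $\eta$ in the chosen test class. The minimality of $\phi_0$ (so $\mathcal{E}(\phi_0+\epsilon\eta)\ge E_0$ for all $\epsilon$, which pins the sign of the quadratic remainder) together with the explicit linear term is what I would use to control this, and it is precisely why the conclusion is asserted only in the distributional sense. I would also record the mild point that $F_\phi$ and $G_\phi$, being the finite partial expectations of $\mathcal{H}$, are locally integrable, so that $F_\phi\,\phi_0$ and $\mathcal{F}^{-1}(G_\phi\hat\phi_0)$ define bona fide distributions and the final equation is meaningful.
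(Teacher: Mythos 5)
Your proposal is correct and follows essentially the same route as the paper: a first-variation/Lagrange-multiplier computation on the quartic functional, Plancherel to transport the momentum-side term back to position space, and pairing the resulting Euler equation with $\phi_{0}$ to identify the multiplier as $2E_{0}$ (your $\Lambda$ is the paper's $2\lambda$). The only difference is that you flag the analytic justification of differentiating under the integral sign for unbounded $\mathcal{H}$, which the paper sidesteps by calling the variation ``formal''; that is a fair point but not a different proof.
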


\begin{proof}
  The formal variation with a Lagrange multiplier $\lambda$ is straightforward:
  \[ \delta \mathcal{E}= \langle \delta \phi ,F ( x ) \phi \rangle + \langle F
     \phi , \delta \phi \rangle + \langle \delta \hat{\phi} ,G ( k )
     \hat{\phi} \rangle + \langle G ( k ) \hat{\phi} , \delta \hat{\phi}
     \rangle = \lambda ( \langle \delta \phi , \phi \rangle + \langle \delta
     \hat{\phi} , \hat{\phi} \rangle + \tmop{cc} . ) \]
  where $\delta \phi \in C_{0}^{\infty} ( \mathbbm{R}^{n} )$. Since the
  Fourier transform is a unitary isomorphism on $L^{2} ( \mathbbm{R}^{n} )$
  one may shift the perturbations to the left:
  \[ \langle \delta \phi ,F \phi + {\cal{F}}^{-1}{G \hat{\phi}} \rangle =2 \lambda
     \langle \delta \phi , \phi \rangle \]
  so that
  \[ F \phi + {\cal{F}}^{-1}{G \hat{\phi}} =2 \lambda \phi   \; \tmop{in} 
     \mathcal{D}' ( \mathbbm{R}^{n} ) . \]
  Since $L^{2} ( \mathbbm{R}^{n} ) \simeq \{ T \in \mathcal{D}' (
  \mathbbm{R}^{n} ) : | T ( \phi ) | \leqslant c_{T} \| \phi \| \}$ we get
  \[ \left\langle \phi_{0} ,F \phi_{0} + {\cal{F}}^{-1}{G \widehat{\phi_{0}}}
     \right\rangle =2 \lambda \| \phi_{0} \|^{2} =2 \lambda \]
  that is
  \[ \langle \phi_{0} ,F \phi_{0} \rangle + \left\langle \phi_{0} , {\cal{F}}^{-1}{G
     \widehat{\phi_{0}}} \right\rangle =\mathcal{E} ( \phi_{0} ) + \langle
     \widehat{\phi_{0}} ,G \widehat{\phi_{0}} \rangle =2\mathcal{E} ( \phi_{0}
     ) \Rightarrow   \lambda =\mathcal{E} ( \phi_{0} ) . \]
  
\end{proof}

Clearly, $\langle \phi ,F_{\phi}   \phi \rangle  + \langle \phi ,
{\cal{F}}^{-1}{G_{\phi}   \hat{\phi}} \rangle = \langle \phi ,F_{\phi}   \phi \rangle
+ \langle \hat{\phi} ,G_{\phi}   \hat{\phi} \rangle =2E_{0}$, and
\[ \langle \phi ,F_{\phi}   \phi \rangle = \langle \hat{\phi} ,G_{\phi}  
   \hat{\phi} \rangle =E_{0} \]
which explains the ``doubling'' of the energy.

\begin{example}
  $\mathcal{H} ( \hbar k,x ) = \frac{\hbar^{2}}{2m} k^{2} +V ( x ) \Rightarrow
  F_{\phi} ( x ) =V ( x ) +E_{\tmop{kin}}$,\;$G_{\phi} ( k ) =
  \frac{\hbar^{2}}{2m} k^{2} +E_{\tmop{pot}}$
\end{example}
\[ \Rightarrow  - \frac{\hbar^{2}}{2m} \Delta \phi_{0} +V \phi_{0} =E \phi_{0}
\]
As expected, the Schr{\"o}dinger equation results. What if $\mathcal{H}$ is
not additive separable?

\begin{example}
  $\mathcal{H} (   \hbar k,x ) = c \sqrt{m^{2} c^{2} + ( \hbar k-e A ( x )^{}
  )^{2}} +e \Phi$.
  \[ F_{\phi} ( x ) = \int_{\mathbbm{R}^{n}} \left( c \sqrt{m^{2} c^{2} + (
     \hbar k-e A ( x )^{} )^{2}} +e \Phi \right) | \hat{\phi} ( k ) |^{2}  d k
  \]
  \[ G_{\phi} ( k ) = \int_{\mathbbm{R}^{n}} \left( c \sqrt{m^{2} c^{2} + (
     \hbar k-e A ( x )^{} )^{2}} +e \Phi \right) | \phi ( x ) |^{2}  d x \]
  which leads to a non-linear equation $F_{\phi}   \phi \noplus +
  {{\check{G}_{\phi}}} \star \phi =2E \phi$, which looks more like to a Hartree-Fock - than to a ``Schr{\"o}dinger'' equation. Does the wavefunction
  concept make sense here at all?
\end{example}

\subsection{Phase space volumes}

Let $F:\mathbbm{R} \rightarrow \mathbbm{R}$ be a convex function, then by
Jensen's inequality:
\[ F \left( \int_{\Gamma_{n}} f d \mu_{\phi} \right) \leqslant
   \int_{\Gamma_{n}} ( F \circ f )  d \mu_{\phi} \]
and
\[ F \left( \frac{1}{N} \sum_{j=1}^{N} \int_{\Gamma_{n}} f d \mu_{\phi_{j}}
   \right) \leqslant \frac{1}{N} \sum_{j=1}^{N} F \left( \int_{\Gamma_{n}} f d
   \mu_{\phi_{j}} \right) \leqslant \frac{1}{N} \sum_{j=1}^{N}
   \int_{\Gamma_{n}} ( F \circ f )  d \mu_{\phi_{j}} . \]
If $F$ is concave, the inequality signs have to be reversed.

\begin{definition}
  Let $N \in \mathbbm{Z}_{+} \nocomma$, $f \in C ( \Gamma_{n} )$, then we set
  \[ \Sigma_{f} ( N ) =  \inf \left\{ \sum_{j=1}^{N} \int_{\Gamma_{n}} f ( x,k
     )  d \mu [ \phi_{j} ]  :  \langle \phi_{k} , \phi_{l} \rangle = \delta_{k
     l} \nocomma , \phi_{k} \in L^{2} ( \mathbbm{R}^{n} ) \right\} \]
\end{definition}

Let $\chi_{L} ( x )$ be the characteristic function of a cube with side length
$L$, then
\[ | \widehat{\chi_{}}_{L} ( k ) |^{2} = \left( \frac{2}{\pi} \right)^{n}
   \prod_{j=1}^{n} \frac{\sin^{2} \left( \frac{k_{j} L}{2} \right)}{k_{j}^{2}}
   \leqslant   \left( \frac{2}{n \pi} \sum_{j=1}^{n} \frac{1}{k_{j}^{2}}
   \right)^{n} . \]

\begin{proposition}
  Let $\varphi_{1} , \ldots , \varphi_{N}$ be an orthonormal system in $L^{2}
  ( \mathbbm{R}^{n} )$ such that $\tmop{supp} ( \varphi_{j} ) \subset \Omega$,
  and $\| \varphi_{j} \|_{\infty} \leqslant C$ for all $j=1, \ldots ,N$, where
  $\Omega$ is an open bounded subset of $\mathbbm{R}^{n}$. Then
  \[ \sum_{j=1}^{N} \int_{\Gamma_{n}} f ( x,k )  d \mu [ \varphi_{j} ]
     \leqslant C  \frac{| \Omega |}{( 2 \pi )^{n}} \int_{\Omega \times
     \mathbbm{R}^{n}} f ( x,k )  d x d k, \]
  for all $f \in C ( \Gamma_{n} ) ,f \geqslant 0$. The right integral above
  can be infinite, of course.

\end{proposition}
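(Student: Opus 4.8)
The plan is to interchange the finite sum with the integral and reduce everything to a pointwise bound on the kernel $K_N(x,k) := \sum_{j=1}^N |\varphi_j(x)|^2\,|\hat{\varphi}_j(k)|^2$. Since $f\geq 0$ and the sum is finite, Tonelli's theorem lets me write $\sum_{j=1}^N \int_{\Gamma_n} f\,d\mu[\varphi_j] = \int_{\Gamma_n} f(x,k)\,K_N(x,k)\,dx\,dk$ with no convergence issues. Because every $\varphi_j$ is supported in $\Omega$, the factor $|\varphi_j(x)|^2$ vanishes for $x\notin\Omega$, hence $K_N$ does too, and the integral is automatically confined to $\Omega\times\mathbbm{R}^n$; this already produces the domain appearing on the right-hand side. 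It therefore suffices to bound $K_N(x,k)$ pointwise by $C\,|\Omega|/(2\pi)^n$.

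For the spatial variable I would simply invoke the sup-norm hypothesis: $|\varphi_j(x)|^2 \leq \|\varphi_j\|_\infty^2 \leq C^2$ for each $j$. As all factors are nonnegative, the elementary inequality $\sum_j a_j b_j \leq (\sup_j a_j)\sum_j b_j$ gives $K_N(x,k) \leq \big(\sup_j\|\varphi_j\|_\infty^2\big)\sum_{j=1}^N |\hat{\varphi}_j(k)|^2$, so the whole problem collapses to controlling $\sum_{j=1}^N |\hat{\varphi}_j(k)|^2$ uniformly in $k$, independently of $N$.

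The key step is a Bessel inequality applied on $L^2(\Omega)$ rather than on all of $\mathbbm{R}^n$. Writing $u_k(x) = e^{i\langle k,x\rangle}$ and using $\operatorname{supp}\varphi_j \subset \Omega$, the definition of the Fourier transform gives $\hat{\varphi}_j(k) = (2\pi)^{-n/2}\int_\Omega \varphi_j(x)\,\overline{u_k(x)}\,dx = (2\pi)^{-n/2}\langle\varphi_j,u_k\rangle_{L^2(\Omega)}$. The crucial observation is that the $\varphi_j$, being orthonormal in $L^2(\mathbbm{R}^n)$ and supported in $\Omega$, remain orthonormal in $L^2(\Omega)$; they thus form a finite orthonormal system there, and Bessel's inequality yields $\sum_{j=1}^N |\langle u_k,\varphi_j\rangle_{L^2(\Omega)}|^2 \leq \|u_k\|_{L^2(\Omega)}^2 = \int_\Omega 1\,dx = |\Omega|$. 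Dividing by $(2\pi)^n$ gives $\sum_{j=1}^N |\hat{\varphi}_j(k)|^2 \leq |\Omega|/(2\pi)^n$ for every $k$, uniformly in $N$.

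Combining the two estimates produces $K_N(x,k) \leq (\sup_j\|\varphi_j\|_\infty^2)\,|\Omega|/(2\pi)^n$ pointwise, and integrating against $f\geq 0$ over $\Omega\times\mathbbm{R}^n$ gives the claimed bound (with the right-hand integral allowed to diverge, in which case the estimate is vacuous). I expect no genuine obstacle: the sup-norm controls the spatial factor and Bessel controls the frequency factor. The only points needing care are that orthonormality survives restriction to $\Omega$ (immediate from the support condition) and the interchange of sum and integral (immediate from nonnegativity). One bookkeeping remark: the constant my argument yields is $\sup_j\|\varphi_j\|_\infty^2$, so matching the stated factor $C$ exactly presupposes reading the hypothesis as $\|\varphi_j\|_\infty^2\leq C$, or absorbing the square into $C$.
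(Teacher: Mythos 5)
Your proof is correct and follows essentially the same route as the paper: bound the spatial factor by the sup-norm hypothesis and control $\sum_{j}|\hat\varphi_j(k)|^2$ via Bessel's inequality for the orthonormal system in $L^2(\Omega)$ tested against $e^{-i\langle k,\cdot\rangle}$. Your closing remark about $C$ versus $C^2$ is well taken --- the paper's own proof silently uses $|\varphi_j(x)|^2\leqslant C$, so you are if anything more careful on that point.
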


\begin{proof}
  With $d m_{n} ( x ) = ( 2 \pi )^{\mbox{-} n/2}  d x$, we get
  \[ \sum_{j=1}^{N} | \varphi_{j} ( x ) |^{2}   | \hat{\varphi}_{j} ( k )
     |^{2}  =  \sum_{j=1}^{N} | \varphi_{j} ( x ) |^{2}   \vert 
     \int_{\mathbbm{R}^{n}} \varphi_{j} ( y ) e^{- i  \langle k ,y \rangle}  d
     m ( y ) \vert ^{2} \leqslant C  \frac{  \chi_{\Omega} ( x )}{( 2 \pi
     )^{n}} \sum_{j=1}^{N} \vert  \left\langle \varphi_{j} ,e^{\mbox{-} i 
     \langle k, \cdot \rangle} \right\rangle \vert ^{2}   \]
  Bessel's inequality yields
  \[ C  \frac{  \chi_{\Omega} ( x )}{( 2 \pi )^{n}} \sum_{j=1}^{N} \vert 
     \left\langle \varphi_{j} ,e^{\mbox{-} i  \langle k, \cdot \rangle}
     \right\rangle \vert ^{2}   \leqslant C  \frac{  \chi_{\Omega} ( x )}{( 2
     \pi )^{n}} \left\| e^{\mbox{-} i  \langle k, \cdot \rangle}
     \right\|_{L^{2} ( \Omega )} = C  \chi_{\Omega} ( x )   \frac{| \Omega
     |}{( 2 \pi )^{n}} . \]
  
\end{proof}

For instance, any disjoint union of $N$ cubes $Q_{L} \subset \Omega$ gives:
\[ \sum_{j=1}^{N} \int_{\Gamma_{n}} f ( x,k )  d \mu [ \chi_{j} ] \leqslant  
   \frac{| \Omega |}{( 2 \pi L )^{n}} \int_{\cup Q_{L}^{} \times
   \mathbbm{R}^{n}} f ( x,k )  d x d k. \]
Thus for any $N \in \mathbbm{Z}_{+}$ we can set $\Omega = \Omega_{N} =
\cup_{j=1}^{N}  Q_{L}^{j}$, which yields
\[ \Sigma_{f} ( N ) \leqslant \sum_{j=1}^{N} \int_{\Gamma_{n}} f ( x,k )  d
   \mu [ \chi_{j} ] \leqslant   \frac{N}{( 2 \pi )^{n}} \int_{\Omega_{N}
   \times \mathbbm{R}^{n}} f ( x,k )  d x d k. \]
When we denote by $f_{1} ( x )$ the partial function
\[ \int_{\mathbbm{R}^{n}} f ( x,k )  d k, \]
assuming that it is finite, then one can increase $N$ and as a consequence
$\Omega_{N}$ as long as
\[ \frac{N}{( 2 \pi )^{n}} \int_{\Omega_{N}} f_{1} ( x )  d x  \]
stays finite in order to get a non trivial upper bound to $\Sigma_{f} ( N )$.
This leads to Berezin type inequalities {\cite{FAB1972}} for which there is a
wealth of literature (as there is for uncertainty principles). See e.g.
{\cite{KW2013}} and references therein. The usual procedure goes along the
lines of supposing that there are ``eigenvalues'' $\lambda_{j}$ and functions
$\phi_{j}$
\[ \lambda_{1} ( f ) \leqslant \ldots \leqslant \lambda_{N} ( f ) \]
such that
\[ \Sigma_{f} ( N ) = \sum_{j=1}^{N} \lambda_{j} ( f ) = \sum_{j=1}^{N}
   \int_{\Gamma_{n}} f d \mu_{\phi_{j}} , \]
then by Jensen's inequality
\[ N F  \left( \frac{1}{N} \Sigma_{f} ( N ) \right) = \sum_{j=1}^{N} F (
   \lambda_{j} ) = \sum_{j=1}^{N} F \left( \int_{\Gamma_{n}} f d
   \mu_{\phi_{j}} \right) \leqslant \sum_{j=1}^{N} \int_{\Gamma_{n}} F \circ f
   d \mu_{\phi_{j}} , \]
so that when $F$ is suitably chosen, one gets some bounds on the sum of
eigenvalues or the state density $N_{\lambda} ( f ) = \| \nu_{\lambda}
\|_{2}^{2} = \| \hat{\nu} \|_{2}^{2}$, where
\[ \nu_{\lambda} ( x ) = \sum_{\{ j: \lambda_{j} \leqslant \lambda \}}
   \phi_{j} ( x ) . \]
Another possibility is to use the fact that
\[ \widehat{P_{m} ( x )  e^{- \frac{1}{2} | x |^{2}}}   ( k )  =  ( -i )^{m} 
   P_{m} ( k )  e^{- \frac{1}{2} | k |^{2}} \]
for any homogeneous harmonic polynomial, so that

\[ d \mu [ \varphi ] =P_{m}^{2} ( x ) P^{2_{}}_{m} ( k )   e^{- | x^{2} | - |
   k |^{2}} \]
for such functions $\varphi$ if suitably normalized.

\subsection{Volume probability}

What can be said about
\[ \int_{\{ \mathcal{H} \leqslant \Lambda \}} d \mu [ \varphi ] \]
besides that its value is in $[ 0,1 ] ?$ Clearly, one has to suppose that $\{
( x,k ) \in \Gamma_{n} :\mathcal{H} ( x, \hbar k ) < \Lambda \}$ is measurable
which is usually the case. Assume that
\[ \{ \mathcal{H} \leqslant \Lambda \} \subset A \times B \subset
   \mathbbm{R}^{n} \times \mathbbm{R}^{n} \]
where $A,B$ are measurable sets, then
\[ \int_{\{ \mathcal{H} \leqslant \Lambda \}} d \mu [ \varphi ]   \leqslant
   \int_{\Gamma_{n}} \chi_{A} ( x ) \chi_{B} ( k )  d \mu [ \varphi ] =
   \int_{A} | \varphi ( x ) |^{2}  d x  \int_{B} | \hat{\varphi} ( k ) |^{2} 
   d k \]
and reverse if $A \times B \subset \{ \mathcal{H} \leqslant \Lambda \} .$
Therefore it is worthwhile to consider the functional
\[ J ( \varphi ,A,B ) = \int_{A} | \varphi ( x ) |^{2}  d x  \int_{B} |
   \hat{\varphi} ( k ) |^{2}  d k \]
for $\varphi \in L^{2} ( \mathbbm{R}^{n} ) \nocomma$, $\| \varphi \|_{2} =1.$
Using the properties in Proposition \ref{p6}, we get
\[ J ( S_{\lambda} \varphi ,A,B ) = \lambda^{-2n} J \left( \varphi , \lambda
   A, \frac{1}{\lambda} B \right) , \; J ( T_{a} \varphi ,A,B ) =J ( \varphi
   ,A+a,B ) , \]
so that any upper bound to $J$ is expected to depend on the product $| A |   |
B |$ only. In fact, using the two orthogonal projections
\[ ( P_{A} \varphi ) ( x ) =  \chi_{A} ( x ) \varphi ( x )   \tmop{and}   (
   \widehat{P_{B} \varphi} ) ( k ) = \chi_{B} ( k ) \hat{\varphi} ( k ) \]
it follows immediately by Plancherel that
\[ \| P_{A} P_{B} \|_{\tmop{HS}}^{2} = \int_{\Gamma_{n}} | \chi_{A}^{} ( x )
   |^{2}   | \widehat{\chi_{}}_{B} ( k-x ) |^{2} d x d k=
   \int_{\mathbbm{R}^{n}} | \chi_{A}^{} ( x ) |^{2}  d x  \int | \chi_{B}^{} (
   k ) |^{2}  d k= | A |   | B | , \]
for measurable sets $A,B$ with finite Lebesgue measure (note: HS means the
Hilbert-Schmitdt norm), so that when $\mathbbm{I}-P_{A} P_{B}$ is invertible
we get
\[ \| \varphi \|_{}^{2} \leqslant \| ( \mathbbm{I}-P_{A} P_{B} )^{-1} \|^{2} 
   \| ( P_{A}  P_{B'} +P_{A'}   ) \varphi \|^{2} , \; A' =\mathbbm{R}^{n}
   \left\backslash A, \, B' \right. =\mathbbm{R}^{n} \backslash B \nobracket ,
\]
which in turn gives (recall that 
$\vert\vert \cdot \vert\vert_{\tmop{HS}}  \leqslant \vert\vert
\cdot \vert\vert$)
\[ \| \varphi \|_{}^{2} \leqslant \frac{\| P_{A'} \varphi \|^{2} + \| P_{B'}
   \varphi \|^{2}}{\left( 1- \sqrt{| A |   | B |} \right)^{2}} \]
\[  \]
whenever $| A |   | B | <1.$ If we rewrite $J$ to
\[ \sqrt{J ( \varphi ,A,B )} = \sqrt{  ( 1- \| P_{A'} \varphi \|^{2} )   ( 1-
   \| P_{B'} \varphi \|^{2} )} \leqslant \frac{1}{2} ( 2- \| P_{A'} \varphi
   \|^{2} - \| P_{B'} \varphi \|^{2} ) \]
it follows
\[ \sqrt{J ( \varphi ,A,B )} \leqslant \frac{1}{2} ( 2- \| P_{A'} \varphi
   \|^{2} - \| P_{B'} \varphi \|^{2} ) \leqslant \frac{1}{2} \left( 2- \left(
   1- \sqrt{| A |   | B |} \right)^{2} \right) . \]
We will keep this in the following Proposition:

\begin{proposition}
  \label{p14}For measurable sets $A,B \subset \mathbbm{R}^{n}$ such that $| A
  |   | B |  <1$ it holds:
  \[ \sup \{ J ( \varphi ,A,B ) :  \varphi \in L^{2} ( \mathbbm{R}^{n} ) , \|
     \varphi \| =1 \} \leqslant \left( 1- \frac{1}{2} \left( 1- \sqrt{| A |  
     | B |} \right)^{2} \right)^{2} \]
\end{proposition}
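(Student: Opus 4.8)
The plan is to run the projection argument sketched just above to its conclusion, assembling the Hilbert--Schmidt identity, the invertibility of $\mathbbm{I}-P_AP_B$, and an arithmetic--geometric mean step into the stated bound. Recall the two orthogonal projections $P_A$ (multiplication by $\chi_A$) and $P_B$ (the analogous cutoff on the Fourier side), for which Plancherel already gives $\|P_AP_B\|_{\tmop{HS}}^2=|A|\,|B|$. Since the operator norm is dominated by the Hilbert--Schmidt norm, the hypothesis $|A|\,|B|<1$ forces $\|P_AP_B\|\leq\sqrt{|A|\,|B|}<1$, so $\mathbbm{I}-P_AP_B$ is invertible and the Neumann series yields $\|(\mathbbm{I}-P_AP_B)^{-1}\|\leq(1-\sqrt{|A|\,|B|})^{-1}$.

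First I would recast $J$ using the complementary projections $P_{A'}$, $P_{B'}$ with $A'=\mathbbm{R}^n\setminus A$, $B'=\mathbbm{R}^n\setminus B$. For normalized $\varphi$ one has $\int_A|\varphi|^2=1-\|P_{A'}\varphi\|^2$ and $\int_B|\hat\varphi|^2=1-\|P_{B'}\varphi\|^2$, hence $J(\varphi,A,B)=(1-\|P_{A'}\varphi\|^2)(1-\|P_{B'}\varphi\|^2)$. Bounding $J$ from above is therefore the same as bounding the two ``leakage'' masses $\|P_{A'}\varphi\|^2$ and $\|P_{B'}\varphi\|^2$ from below. To obtain that lower bound I would use the algebraic identity $\varphi-P_AP_B\varphi=P_{A'}\varphi+P_AP_{B'}\varphi$ (which follows from $P_B=\mathbbm{I}-P_{B'}$), so that $\varphi=(\mathbbm{I}-P_AP_B)^{-1}(P_{A'}\varphi+P_AP_{B'}\varphi)$. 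The two terms $P_{A'}\varphi$ and $P_AP_{B'}\varphi$ are supported on the disjoint sets $A'$ and $A$, hence orthogonal, and since $\|P_A\|\leq1$ this gives $\|P_{A'}\varphi+P_AP_{B'}\varphi\|^2\leq\|P_{A'}\varphi\|^2+\|P_{B'}\varphi\|^2$. Combining with the resolvent estimate and $\|\varphi\|=1$ produces $\|P_{A'}\varphi\|^2+\|P_{B'}\varphi\|^2\geq(1-\sqrt{|A|\,|B|})^2$.

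Finally I would close with the arithmetic--geometric mean inequality: as both factors of $J$ lie in $[0,1]$, $\sqrt{J(\varphi,A,B)}\leq 1-\tfrac12(\|P_{A'}\varphi\|^2+\|P_{B'}\varphi\|^2)\leq 1-\tfrac12(1-\sqrt{|A|\,|B|})^2$, and since the right-hand side is positive, squaring and passing to the supremum over normalized $\varphi$ gives the claim. The one step deserving care --- and the only genuine obstacle, since the Hilbert--Schmidt computation and the mean inequality are routine --- is the passage through the resolvent: one must verify the identity $\varphi-P_AP_B\varphi=P_{A'}\varphi+P_AP_{B'}\varphi$ and then use disjointness of supports together with the contractivity of $P_A$ to collapse the cross term, so that exactly the clean sum $\|P_{A'}\varphi\|^2+\|P_{B'}\varphi\|^2$ appears in the numerator.
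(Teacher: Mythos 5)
Your proof is correct and follows essentially the same route as the paper: the Hilbert--Schmidt identity $\|P_AP_B\|_{\mathrm{HS}}^2=|A|\,|B|$, the resolvent bound on $(\mathbbm{I}-P_AP_B)^{-1}$ giving $\|P_{A'}\varphi\|^2+\|P_{B'}\varphi\|^2\geqslant(1-\sqrt{|A|\,|B|})^2$, and the arithmetic--geometric mean step applied to $\sqrt{J}$. You merely make explicit several steps the paper leaves implicit (the identity $\varphi-P_AP_B\varphi=P_{A'}\varphi+P_AP_{B'}\varphi$, the orthogonality of the two terms, and the correct direction of the operator-norm versus Hilbert--Schmidt-norm comparison, which the paper's parenthetical remark actually states backwards), so no further changes are needed.
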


The above argument was used by Amrein und Berthier in {\cite{AB1977}}, to show
that
\[ \dim ( P_{A}^{\perp} \cap P_{B}^{\perp} ) L^{2} ( \mathbbm{R}^{n} ) =
   \infty . \]
The restriction $| A |   | B | <1$, can be overcome when using a theorem of
Nazarov which was generalized by Jaming {\cite{PJ2006}} to $\mathbbm{R}^{n}
\nocomma$, $n \geqslant 1:$ There is a constant $C_{0}$ such that for all
$\varphi \in L^{2} ( \mathbbm{R}^{n} ) \cap \{ \| \phi \| =1 \}$
\begin{equation}
  \| P_{A'} \varphi \|^{2} + \| P_{B'} \varphi \|^{2} \geqslant  C_{0}   \exp
  \left( -C_{0}   \min \left( | A | | B | \nocomma ,w ( A ) | A
  |^{\frac{1}{n}} ,w ( B ) | B |^{\frac{1}{n}} \right) \right) \nocomma ,
  \label{Jaming}
\end{equation}
where $w ( A ) = \int_{\tmop{SO} ( n )} | \Pi_{m} ( A ) | d \nu_{n} ( m )
\nocomma , \Pi_{m} :A \rightarrow \tmop{span} ( m e_{1} )$, is an ``average''
width of $A.$ If $A$ is a ball then one obtains the diameter, for example.
Nazarov's original statment reads a bit simpler, however, we have to set
$n=1:$
\begin{equation}
  \int_{\mathbbm{R}^{}} | \varphi ( x ) |^{2}  d x  \leqslant   \alpha  e^{2
  \beta   | A |   | B |}   \left( \int_{\mathbbm{R} \backslash A \nobracket} |
  \varphi ( x ) |^{2}  d x + \int_{\mathbbm{R} \backslash B \nobracket} |
  \hat{\varphi} ( k ) |^{2}  d k \right) \label{Nazarov}
\end{equation}
for some constants $\alpha , \beta$ independent of $\varphi ,A,B.$ Needless to
say, both proofs are far from trivial. For our $J$ under consideration we
obtain along the same lines as in Proposition \ref{p14}:

\begin{proposition}
  \label{p15}For measurable sets $A,B \subset \mathbbm{R}^{n}$ such that $| A
  | < \infty ,  | B |  < \infty$, there are absolute constants $c_{1} ( n )
  >0,c_{2} ( n ) >0$ such that
  \[ \sup \{ J ( \varphi ,A,B ) :  \varphi \in L^{2} ( \mathbbm{R}^{n} ) , \|
     \varphi \| =1 \} \leqslant ( 1-c_{1} e^{-c_{2}   \eta_{n} ( A,B )} )^{2}
  \]
  where $\eta_{n} ( A,B )$ is given by the exponents in $\left( \ref{Jaming}
  \right)$ and/or $\left( \ref{Nazarov} \right)$ respectively.
\end{proposition}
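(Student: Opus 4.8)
The plan is to reproduce the chain of estimates that culminates in Proposition \ref{p14}, replacing only the crude input $\|P_{A'}\varphi\|^2 + \|P_{B'}\varphi\|^2 \geq (1-\sqrt{|A||B|})^2$ --- which forced the restriction $|A||B|<1$ --- by the uniform exponential lower bound $\left( \ref{Jaming} \right)$ of Nazarov--Jaming. Because that bound carries no smallness hypothesis, only the finiteness $|A|,|B|<\infty$ will survive into the assumptions.

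First I would express the functional through the truncation projections introduced above. By Plancherel,
\[ J(\varphi,A,B) = \int_A |\varphi(x)|^2\, dx \int_B |\hat\varphi(k)|^2\, dk = \|P_A\varphi\|^2\, \|P_B\varphi\|^2 , \]
and since $\|\varphi\|=1$ each factor is the complement of its counterpart on $A'=\mathbbm{R}^n\setminus A$, $B'=\mathbbm{R}^n\setminus B$:
\[ \|P_A\varphi\|^2 = 1 - \|P_{A'}\varphi\|^2 , \qquad \|P_B\varphi\|^2 = 1 - \|P_{B'}\varphi\|^2 . \]
Both factors lie in $[0,1]$, so the arithmetic--geometric mean inequality, applied exactly as in the lines preceding Proposition \ref{p14}, gives
\[ \sqrt{J(\varphi,A,B)} = \sqrt{\left(1-\|P_{A'}\varphi\|^2\right)\left(1-\|P_{B'}\varphi\|^2\right)} \leq 1 - \tfrac{1}{2}\left(\|P_{A'}\varphi\|^2 + \|P_{B'}\varphi\|^2\right) . \]

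The decisive step is then to bound the sum on the right from below. Inserting $\left( \ref{Jaming} \right)$, whose constant $C_0$ is independent of $\varphi,A,B$, and abbreviating by $\eta_n(A,B)$ the minimum occurring in its exponent, I obtain for every normalized $\varphi$
\[ \sqrt{J(\varphi,A,B)} \leq 1 - \tfrac{C_0}{2}\, e^{-C_0\,\eta_n(A,B)} . \]
Taking the supremum over all unit $\varphi$ and squaring then yields the assertion with $c_1(n)=C_0/2$ and $c_2(n)=C_0$. In the one-dimensional case one may instead feed $\left( \ref{Nazarov} \right)$ into the same scheme, reading $\eta_1(A,B)=|A||B|$ and extracting $c_1,c_2$ from $\alpha,\beta$.

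Since the deep content --- the exponential lower bound on $\|P_{A'}\varphi\|^2+\|P_{B'}\varphi\|^2$ --- is quoted rather than reproved, the only obstacle left is one of consistency: one must check that the right-hand side stays nonnegative, so that squaring preserves the inequality and the bound survives passage to the supremum. This is automatic, for the sum $\|P_{A'}\varphi\|^2+\|P_{B'}\varphi\|^2$ never exceeds $2$, whence the genuine lower bound $C_0 e^{-C_0\eta_n}$ cannot exceed $2$ and $1-\tfrac{C_0}{2}e^{-C_0\eta_n}\geq 0$. No question of attainment of the supremum arises, since only an upper estimate is claimed; the uniformity of $C_0$ in $\varphi$ is exactly what lets the bound pass through the supremum unchanged.
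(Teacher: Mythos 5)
Your proposal is correct and is essentially the paper's own argument: the paper proves Proposition \ref{p15} only by the remark that one proceeds ``along the same lines as in Proposition \ref{p14},'' i.e.\ exactly by replacing the bound $\|P_{A'}\varphi\|^{2}+\|P_{B'}\varphi\|^{2}\geqslant\bigl(1-\sqrt{|A|\,|B|}\bigr)^{2}$ with the Nazarov--Jaming lower bound $\left(\ref{Jaming}\right)$ in the arithmetic--geometric mean step, which is what you do. Your explicit check that $1-\tfrac{C_{0}}{2}e^{-C_{0}\eta_{n}}\geqslant 0$ (so that squaring is legitimate) is a detail the paper leaves tacit, and it is handled correctly.
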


Unfortunately, these constants are not yet optimal quantitatively, although
for some sets satisfying some geometrical properties (e.g. convexity) there
are good bounds (see {\cite{PJ2006}} for details). The lower bound
\[ \sup \{ J ( \varphi ,A,B ) :  \varphi \in L^{2} ( \mathbbm{R}^{n} ) , \|
   \varphi \| =1 \} \geqslant   ( 2 \pi )^{-2n}   | A |   | B |  e^{- (   \sup
   _{A} | x |^{2} ) - ( \sup_{B}   | k |^{2} )  } \]
shows that even for small domains the values of $J$ may be considerable, so
the exponential growth of the constants in
$\left(\ref{Jaming}\right),\left(\ref{Nazarov}\right)$ 
is no surprise at all.

\noindent \\
{\small\tt version: \$Id:: qmphspm.tex 2 2011-08-16 13:44:42Z kfp \$} \\
{\small\tt email: kp@scios.ch}


\begin{thebibliography}{10}
  \bibitem[1]{BDJ2001}Philippe Jaming~Aline Bonami, Bruno Demange.{\newblock}
  Hermite functions and uncertainty principles for the Fourier and the
  windowed Fourier transforms.{\newblock} \tmtextit{Rev. Mat. Iberoamericana},
  19(1):23--55, 2003.{\newblock} ArXiv:math/0102111.{\newblock}
  
  \bibitem[2]{MB1985}M.~Benedicks.{\newblock} On the Fourier transform of
  functions supported on sets of finite Lebesgue measure.{\newblock}
  \tmtextit{J. Math. Anal. Appl.}, 106:180--183, 1985.{\newblock}
  
  \bibitem[3]{FAB1972}F. A.~Berezin.{\newblock} Covariant and contravariant
  symbols of operators.{\newblock} \tmtextit{Izv. Akad. Nauk SSSR},
  13:1134--1167, 1972.{\newblock}
  
  \bibitem[4]{JC1970}J.L.B.~Cooper.{\newblock} Functional equations for linear
  transformations.{\newblock} \tmtextit{Proc. London Math. Soc.}, 20(3):1--32,
  1970.{\newblock}
  
  \bibitem[5]{AH1982}Alexander~Hertle.{\newblock} A characterization of
  Fourier and Radon transforms on euclidean space.{\newblock}
  \tmtextit{Transaction of the American Mathematical Society},
  273(2):595--607, 1982.{\newblock}
  
  \bibitem[6]{KW2013}Timo Weidl~Hynek Kovarik.{\newblock} Improved
  Berezin-Li-Yau inequalities with magnetic field.{\newblock}
  ArXiv:1307.7091, 07, 2013.{\newblock}
  
  \bibitem[7]{PJ2006}Philippe~Jaming.{\newblock} Nazarov's uncertainty
  principles in higher dimension.{\newblock} \tmtextit{Journal of
  Approximation Theory}, 149(1):30--41, 2007.{\newblock}
  ArXiv:math/0612367.{\newblock}
  
  \bibitem[8]{AV2008}V. Milman~S. Alesker, S. Artstein-Avidan.{\newblock} A
  characterization of Fourier tranform and related topics.{\newblock}
  \tmtextit{C.R. Math. Acad. Sci. Paris}, 346:625--628, 2008.{\newblock}
  
  \bibitem[9]{AS1974}Antonio~Steiner.{\newblock} Eine Ungleichung {\"u}ber die
  Fouriertransformation.{\newblock} \tmtextit{Rendiconti del Circolo
  Matematico di Palermo}, 23(1):83--86, 1974.{\newblock}
  
  \bibitem[10]{AB1977}A.M Berthier~W.O Amrein.{\newblock} On support
  properties of $L^p$-functions and their Fourier transforms.{\newblock}
  \tmtextit{Journal of Functional Analysis}, 03:258--267, 1977.{\newblock}
\end{thebibliography}
\end{document}